\newcommand*{\QEDA}{\hfill\ensuremath{\blacksquare}}
\newcommand*{\CQFD}{\hfill\ensuremath{\square}}
\newcommand{\NN}{\mathbb{N}}
\newcommand{\KK}{\mathbb{K}}
\newcommand{\DD}{\mathbb{D}}
\DeclareMathOperator{\ord}{ord}
\DeclareMathOperator{\res}{res}
\begin{document}

\title{On Rational Recursion for Holonomic Sequences}

\author{Bertrand Teguia Tabuguia\orcidlink{0000-0001-9199-7077} and James Worrell}

\institute{Department of Computer Science\\
           University of Oxford, UK\\
    \email{\{bertrand.teguia,jbw\}@cs.ox.ac.uk}}

\maketitle              

\begin{abstract}

It was recently conjectured that every component of a discrete-time rational dynamical system is a solution to an algebraic difference equation that is linear in its highest-shift term (a quasi-linear equation). 
We prove that the conjecture holds in the special case of holonomic sequences, which can straightforwardly be 
represented by rational dynamical systems.
We propose two algorithms for converting holonomic recurrence equations into such quasi-linear equations. The two algorithms differ in their efficiency and the minimality of orders in their outputs.

\keywords{Discrete dynamical system \and P-recursive sequence \and difference algebra}
\end{abstract} 

\section{Introduction}

Let us consider a dynamical system of the form
\begin{equation}\label{eq:dynsys1}
\begin{cases}
    s_i(n+1) = R_i(s_1(n),\ldots,s_k(n)),\, i=1,\ldots,k,\\
    a(n) = T(s_1(n),\ldots,s_k(n))
\end{cases},\,\, n\in\NN\coloneqq\{0,1,\ldots\},
\end{equation}
where $T,R_1,\ldots,R_k$ are rational functions over a field $\KK$ of characteristic zero. The vector $\boldsymbol{s}(n)=(s_1(n),\ldots,s_k(n))$ represents the state sequence, and $a(n)$ is called the output sequence. The integer $k$ is the dimension of the system. We are interested in the case where $T$ is the projection of $\boldsymbol{s}(n)$ on one of its components, i.e., $T(\boldsymbol{s}(n))=s_j(n)$ for some $j\in\{1,\ldots,k\}$. The resulting system defines $(a(n))_n$ as a \textit{rational recursive sequence} (or simply ratrec sequence) according to the definition in \cite{clemente2023rational}.  
In control theory, this definition relates to implicit state-space representation \cite[Chapter 3]{hangos2006analysis}. In computer science, sequences of this kind appear as cost-register automata~\cite{AlurDDRY13}
or polynomial automata~\cite{benedikt2017polynomial} over a unary alphabet.
The zeroness problem asks whether such sequence is identically zero\cite{lagarias1988unique,benedikt2017polynomial}. This problem relates to the equivalence problem of classes of automata and grammars \cite{schutzenberger1961definition,worrell2013revisiting}. 
(It can also be contrasted with the Skolem Problem, which asks whether a linear recurrence sequence contains at least one zero term~\cite{ouaknine2015linear}.) 

Observe that from \eqref{eq:dynsys1} we can write $a(n)=T(\boldsymbol{R}^n(\boldsymbol{s}(0)), \boldsymbol{R}=(R_1,\ldots,R_k)$. Hence, another perspective on the zeroness problem arises from the study of the orbit $\{\boldsymbol{R}^n(\boldsymbol{s}(0)),\, n\in\NN\}$. This is particularly investigated in the case of polynomial updates in \eqref{eq:dynsys1} (see \cite{lagarias1988unique,Whang2023}).

It is conjectured in \cite{clemente2023rational} that the output sequence $(a(n))_{n\in\NN}$ (or simply $(a(n))_n$) from \eqref{eq:dynsys1} satisfies a recurrence relation of the form
\begin{equation}
    a(n+l) = r(a(n),\ldots,a(n+l-1)) \label{eq:simprec},
\end{equation}
where $r\in\KK(x_1,\ldots,x_l)$. A sequence satisfying such a recursion is called \textit{simple ratrec}~\cite{clemente2023rational}. Recall that a sequence $(u(n))_n$ is holonomic (or P-recursive) if there exist polynomials $P_0,\ldots,P_l,$ not all zeros such that
\begin{equation}
    P_l(n)\,u(n+l)+\cdots+P_0(n)\,u(n)=0, \forall n\in\NN.
\end{equation}
The maximum degree of the polynomials $P_i,i=0,\ldots,l$ is called the degree of the holonomic equation. It is also the degree of $(u(n))_n$ if $l$ is minimal. Using the change of variables $s_{i+1}(n)=u(n+i), i=0,\ldots,l$, one verifies that the holonomic sequence $(u(n))_n$ may be defined by the system
\begin{equation}\label{eq:holotosys}
\begin{cases}
    s_i(n+1) = s_{i+1}(n),\, i=0,\ldots,l\\
    s_{l+1}(n+1) = - \frac{1}{P_d(s_{l+2}(n))}\left(P_0(s_{l+2}(n))\,s_1(n)+\cdots+P_{l-1}(s_{l+2}(n))\,s_l(n)\right)\\
    s_{l+2}(n+1) = s_{l+2}(n)+1\\
    a(n) = s_{l+1}(n)
\end{cases}.
\end{equation}
Thus, because of the conjecture that every ratrec sequence is simple ratrec, it is natural to investigate the particular case of holonomic sequences.

\section{Problem Statement in Difference Algebra}\label{sec:pbda}

Is every holonomic sequence simple ratrec? We approach this question from the setting of difference algebra \cite{Cohnbook,ovchinnikov2020effective}. We consider the difference ring $(\KK(n),\,\sigma)$, where $\sigma$ denotes the \emph{shift map}, which is defined by $\sigma(f(n))=f(n+1)$ for all $f \in \KK(n)$. (The map $\sigma$ should not be confused with the related shift operator, mainly used in operator algebra.) We look at $s(n+j),j\in\NN$, as a variable in a multivariate polynomial ring. For a given holonomic equation
\begin{equation}\label{eq:holoeq}
    P_l(n)\,s(n+l)+\cdots+P_0(n)\,s(n)=0,
\end{equation}
we consider the ring of difference polynomials $\DD_s\coloneqq \KK(n)[\sigma^{\infty}\left(s(n)\right)]\coloneqq \KK(n)[s(n+\infty)]$, i.e., $\DD_s\coloneqq \KK(n)[s(n),s(n+1),\sigma^2(s(n)),\ldots]$, and the difference ideal $\langle \sigma^{\infty}(p)\rangle \subset \DD_s$, where $p \in \DD_s$ is the left-hand side of \eqref{eq:holoeq}. Note that for any $p\in\DD_s$, the associated equation $p=0$ is equivalent to the one obtained after clearing the denominators. For this reason, we will always assume that our difference polynomials are polynomials in $n$ and some shifts of $s(n)$. A difference polynomial $p\in\DD_s$ is holonomic if the associated equation $p=0$ is a holonomic, i.e. $p$ has degree at most one in each of the indeterminates $s(n),s(n+1),\ldots,s(n+\ell)$.  We similarly say that $p$ is simple ratrec if the equation $p=0$ is simple ratrec. The order $\ord(p)$ of a difference polynomial $p$ is the same as that of the associated difference equation. Its degree $\deg(p)$ is its total degree over $\DD_s$. Unless stated otherwise, for a holonomic $p\in\DD_s$, by degree, we always mean that of its associated equation since, by definition, $\deg(p)=1$. For any $p\in\DD_s$ and a sequence $(a(n))_n$ of general term $a(n)$, $p(a(n))$ denotes the evaluation of $p$ at $(a(n))_n$. If $p(a(n))=0$, we say that $(a(n))_n$ is a zero of $p$. From the analogy to differential algebra, the zeros of $p$ may be called \textit{shift-algebraic functions (or sequences)} (see \cite[Chapter IV]{kolchin1973differential},\cite{levin2008difference,RAB2023}).

A version of the following theorem was proved in \cite[Theorem 7.1]{cadilhac2021polynomial} without explicit use of difference elimination. Our version provides a bound for the order of the equation sought. We adapt a known technique from the differential case with enough details. Further details can be found in \cite[Corollary 3.21]{hong2020global} and \cite[Section 1.7]{Glebnotes}. See also the algorithms from \cite{boulier1995representation,bachler2012algorithmic}.

\begin{theorem}\label{th:cancelpoly} Let $(a(n))_n$ be a ratrec sequence defined by \eqref{eq:dynsys1}. Then there exists a difference polynomial $p\in\DD_s$ of order at most $k$ such that $p(a(n))=0$.
\end{theorem}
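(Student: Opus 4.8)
The plan is to exploit the fact that the orbit $\{\boldsymbol R^n(\boldsymbol s(0)) : n\in\NN\}$ lies inside a $k$-dimensional affine space, so that the $k+1$ rational functions $a(n),a(n+1),\ldots,a(n+k)$, viewed as elements of the function field of that space, must be algebraically dependent over $\KK$. Concretely, I would introduce indeterminates $y_1,\ldots,y_k$ standing for the generic initial state $\boldsymbol s(0)$, and work in the field $\KK(y_1,\ldots,y_k)$ — a field of transcendence degree $k$ over $\KK$. For each $m\ge 0$ the $m$-th iterate gives a point $\boldsymbol R^m(\boldsymbol y)\in\KK(y_1,\ldots,y_k)^k$ (well defined as a rational map after discarding the locus where some denominator vanishes), and hence $a_m\coloneqq T(\boldsymbol R^m(\boldsymbol y)) = (\boldsymbol R^m(\boldsymbol y))_j \in \KK(y_1,\ldots,y_k)$. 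The first step is to observe that the evaluation $y_i\mapsto s_i(0)$ sends $a_m$ to $a(m)$, so any algebraic relation among the $a_m$ over $\KK$ specializes to the same relation among the $a(n+\cdot)$.

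Second, since $\KK(y_1,\ldots,y_k)$ has transcendence degree $k$, the $k+1$ elements $a_0,a_1,\ldots,a_k$ are algebraically dependent over $\KK$: there is a nonzero polynomial $Q\in\KK[x_1,\ldots,x_{k+1}]$ with $Q(a_0,\ldots,a_k)=0$ in $\KK(y_1,\ldots,y_k)$. Third, I would promote this to a difference polynomial: set $p(s(n),\ldots,s(n+k))\coloneqq Q(s(n),s(n+1),\ldots,s(n+k))\in\DD_s$, which has order at most $k$. Evaluating $p$ at the sequence $(a(n))_n$ and using that $a(n+i)$ is the specialization of $a_i$ under $y\mapsto \boldsymbol s(0)$, together with the commuting square "apply $\boldsymbol R$ / specialize $y$", gives $p(a(n))=0$ for every $n$. (One should record that $p\in\KK[x_1,\dots,x_{k+1}]$, so it is a legitimate element of $\DD_s$ with no $n$-dependence; the theorem only asks for some such $p$, not a holonomic or quasi-linear one.)

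The main obstacle is the handling of the rational — as opposed to polynomial — updates: the iterates $\boldsymbol R^m$ are only defined as rational maps, and a priori the denominator of $a_m$ could vanish along the specialization $y\mapsto\boldsymbol s(0)$, or $\boldsymbol R^m(\boldsymbol s(0))$ could fail to be defined for some finite $n$ even though $(a(n))_n$ is (by hypothesis) a well-defined sequence. I would deal with this by a clean separation: first derive the identity $Q(a_0,\ldots,a_k)=0$ purely in the field $\KK(y_1,\ldots,y_k)$, where everything is exact; then clear denominators to get a genuine polynomial identity $\widetilde Q(a_0,\ldots,a_k,\text{denominators})=0$ in $\KK[y_1,\ldots,y_k]$, which specializes without any division; and finally note that because $(a(n))_n$ exists as stated, the relevant denominators do not vanish at the actual trajectory, so dividing back is legitimate at each $n$. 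An alternative, which I would mention as a remark, is to homogenize/projectivize — replace $\KK^k$ by $\mathbb P^k$ and the $R_i$ by a single self-map given by homogeneous forms — so that iteration is everywhere defined and the transcendence-degree count is unchanged; this avoids the bookkeeping but needs a line to justify that the projective relation pulls back to the affine one. Either way, the arithmetic of actually computing $Q$ is not needed for the existence statement, so I would not grind through it.
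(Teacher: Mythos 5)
Your proposal is correct, and its engine is the same as the paper's: $k+1$ quantities living in a field of transcendence degree $k$ over $\KK$ must be algebraically dependent. The packaging, however, is genuinely different. The paper works in the difference ring $\KK[a(n+\infty),s_i(n+\infty)]$, forms the difference ideal generated by the denominator-cleared system equations, saturates by all shifts of the common denominator, and then observes that modulo the $k$th truncation $I_k$ every $a(n+j)$ reduces into (a localization of) $\KK[s_1(n),\ldots,s_k(n)]$ --- which is precisely your function field $\KK(y_1,\ldots,y_k)$ of the generic state --- forcing $I_k\cap\KK[a(n),\ldots,a(n+k)]\neq\{0\}$. You bypass the ideal-theoretic scaffolding and argue directly with a generic point and then specialize; this is more elementary and makes the transcendence count transparent, but it obliges you to handle the denominator issues by hand, which your clearing-denominators step does address (note also that well-definedness of $(a(n))_n$ is what guarantees that the iterates $\boldsymbol{R}^m$ exist as rational maps at all, since otherwise some iterated denominator would vanish identically and the trajectory itself would break). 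What the paper's formulation buys in exchange is uniformity --- saturation by $\{\sigma^j(Q)\}$ disposes of all denominators at once --- and effectivity: the elimination ideal is computable by Gr\"obner bases, which is exactly the machinery reused in the algorithms later in the paper, whereas your argument is purely existential. One point to make explicit in a final write-up: the relation among $a_0,\ldots,a_k$ must be specialized at $\boldsymbol{y}\mapsto\boldsymbol{s}(n)$ for \emph{each} $n$, not only at $\boldsymbol{s}(0)$; your remark about the commuting square covers this, but it is the step that actually yields $p(a(n))=0$ for all $n$ rather than just at $n=0$.
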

\begin{proof} Let $Q$ be the common denominator in \eqref{eq:dynsys1} so that $R_i=r_i/Q$, $T=t/Q$. Define the difference polynomials 
\begin{equation}
    \begin{split}
        &p_i\coloneqq Q\,s_i(n+1)-r_i(\boldsymbol{s}(n)),\,\, i=1,\ldots,k,\\
        &q\coloneqq Q\,a(n)-t(\boldsymbol{s}(n)).
    \end{split}
\end{equation}
The corresponding difference ring is $\KK[a(n+\infty), s_i(n+\infty), i=1,\ldots,k]$. We consider the difference ideal 
\[I\coloneqq \langle \sigma^{\infty}(q),\sigma^{\infty}(p_i),i=1,\ldots,k\rangle \colon \left\{\sigma^{\infty}(Q)\right\}^{\infty},\] 
where ``$\colon \left\{\sigma^{\infty}(Q)\right\}^{\infty}$'' denotes the saturation with $\{\sigma^j(Q),\,j\in\NN\}$. This is the main difference with the differential case, which only requires saturating with the corresponding $\{Q\}$. We take a lexicographic monomial ordering that ranks all shifts of $a(n)$ higher than the other indeterminates. Then, the proof follows from the following elimination:
\[I_k \cap \KK[a(n+\infty)] \neq \{0\}, \]
where $I_k\coloneqq \langle \sigma^{j}(q),\sigma^{j}(p_i),i=1,\ldots,k, j=0,\ldots,k\rangle \colon \left\{\sigma^j(Q),\,j\leq k\right\}^{\infty}$, is the $k$th truncation  of $I$. Indeed, by the first Buchberger's criterion, the generators of $I_k$ thus defined constitute a Gr\"obner basis of $I_k$. Since the polynomials in $\KK[a(n+j),j=0,\ldots,k]$ not reducible with respect to this Gr\"obner basis represent $\KK[s_i(n),\, i=1,\ldots,k]$, the transcendence degree of $\KK[a(n+j), s_i(n+j), i=1,\ldots,k, j=0\ldots k]/I_k$ is exactly $k$. However, the transcendence degree of $\KK[a(n+j),\, j=0,\ldots,k]$ is $k+1$. Therefore, we must have $I_k \cap \KK[a(n+j),\, j=0,\ldots,k] \neq \{0\}$. This fact is well exploited in  \cite{RAB2023,teguia2023arithmetic}. \CQFD
\end{proof}
In general, the equation deduced from \Cref{th:cancelpoly} is not linear in its highest-shift term. Thus, using the ratrec definition of a holonomic sequence may not simplify the problem. With the above formulation, the central question of this paper might be rephrased as follows.

\begin{problem} Let $p\in\DD_s$ be holonomic. Can we find $q\in\langle \sigma^{\infty}(p) \rangle$ such that $q$ is linear in $s(n+\ord(q))$? 
\end{problem}

In the next section, we propose a natural way to attack this problem for lower-degree holonomic difference polynomials.

\section{Lower-Degree Holonomic Difference Polynomials}

Let $p\in\DD_s$ be a holonomic difference polynomial of degree $d\in\NN\setminus \{0\}$, and $l=\ord(p)$. We have 
\begin{equation}
    \begin{split}
        &p=P_l(n)\, s(n+l)+\cdots+P_0(n)\,s(n),\\
        &\sigma^j(p)=P_l(n+j)\, s(n+l+j)+\cdots+P_0(n+j)\,s(n+j).
    \end{split}
\end{equation}
We detail our method for $j\leq 2$. Since $p$ is of degree $d$, we can write $P_i(n)=c_{i,d}\,n^d+c_{i,d-1}\,n^{d-1}+\cdots+c_{i,0}$, $c_{i,j}\in\KK$, $i\leq l, j\leq d$. We view $p$ as a polynomial in $\KK[s(n+l),\ldots,s(n)][n]$, i.e.,
\begin{equation}
    p = \sum_{k=0}^d \left(c_{l,k}\,s(n+l)+\cdots+c_{0,k}\,s(n)\right)\,n^k = \sum_{k=0}^d C_k\cdot S_l\, n^k, \label{eq:redp}
\end{equation}
where $S_{l+k}=(s(n+k),\ldots,s(n+l+k))^T$ and $C_k=(c_{0,k},c_{1,k},\ldots,c_{l,k})$. Using the binomial theorem, one verifies that
\begin{align}
&\sigma(p) = \sum_{k=0}^d\sum_{j=0}^k \binom{d-k+j}{j}\, C_{d-j}\cdot S_{l+1}\, n^{d-k}, \label{eq:sigp}\\
&\sigma^2(p) = \sum_{k=0}^d\sum_{j=0}^k \binom{d-k+j}{j}\, 2^j\, C_{d-j}\cdot S_{l+2}\, n^{d-k}. \label{eq:sig2p}
\end{align}
We thus obtain relations involving many variables. We aim to reduce them to the strict minimum needed to check the existence of a simple ratrec difference polynomial in $\langle \sigma^{\infty}(p)\rangle$. For instance, since $p$ is not simple ratrec, the scalar product $C_k\cdot S_l$ can be seen as a single variable $\alpha_{0,k}, k\leq d$. Our first step is to look for a simple ratrec difference polynomial of order $l+1$. To that end, we make the term $s(n+l+1)$ appears explicitly in $\sigma(p)$ and encapsulate the remaining variables in the new variables $\beta_{l,j}\in\KK$ and $\alpha_{1,j}\in\KK[s(n),\ldots,s(n+l)]$, $j=0,\ldots,d$ such that
\begin{equation}
    \sigma(p) = \sum_{k=0}^d\sum_{j=0}^k \binom{d-k+j}{j}\, \left(\beta_{l,d-j}\,s(n+l+1)+\alpha_{1,d-j}\right)\, n^{d-k} \label{eq:sigpsimp}.
\end{equation}
This reduction of the number of variables leads to the following proposition thanks to elimination techniques with resultants or Gr\"{o}bner bases (see \cite{becker1993grobner}). 
\begin{proposition}\label{prop:deg1} Every holonomic sequence of degree $1$ and order $l\in\NN$ is simple ratrec of order at most $l+1$.
\end{proposition}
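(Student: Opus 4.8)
The plan is to eliminate the ``constant'' $n$ between the two equations $p=0$ and $\sigma(p)=0$ by a single cross-multiplication. Specializing \eqref{eq:redp} and \eqref{eq:sigp} to $d=1$, write $p=\alpha_{0,1}\,n+\alpha_{0,0}$ with $\alpha_{0,k}=C_k\cdot S_l$, and correspondingly $\sigma(p)=\sigma(\alpha_{0,1})\,(n+1)+\sigma(\alpha_{0,0})$, where $\sigma(\alpha_{0,k})=C_k\cdot S_{l+1}$. Since the coefficient of $n$ in $p$ is $\alpha_{0,1}$ and the coefficient of $n$ in $\sigma(p)$ is $\sigma(\alpha_{0,1})$, I would set
\[
 q\ :=\ \alpha_{0,1}\,\sigma(p)\ -\ \sigma(\alpha_{0,1})\,p\ =\ \alpha_{0,1}\,\sigma(\alpha_{0,1}+\alpha_{0,0})\ -\ \alpha_{0,0}\,\sigma(\alpha_{0,1}).
\]
The second equality shows that $q$ is free of $n$, and $q\in\langle\sigma^{\infty}(p)\rangle$ because $\alpha_{0,1},\sigma(\alpha_{0,1})\in\DD_s$. (Up to sign this is just $\res_n(p,\sigma(p))$, which is what the $\alpha,\beta$ repackaging in \eqref{eq:sigpsimp} prepares for.) It then remains to check that $q$ has the required shape.

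The next step is to read off the degree of $q$ in the highest shift $s(n+l+1)$. The factors $\alpha_{0,1}$ and $\alpha_{0,0}$ involve only $s(n),\ldots,s(n+l)$, whereas $\sigma(\alpha_{0,1})=C_1\cdot S_{l+1}$ and $\sigma(\alpha_{0,1}+\alpha_{0,0})=(C_1+C_0)\cdot S_{l+1}$ are $\KK$-linear in $s(n+1),\ldots,s(n+l+1)$, with coefficient of $s(n+l+1)$ equal to $c_{l,1}$ and $c_{l,1}+c_{l,0}$, respectively. Hence $q$ has degree at most $1$ in $s(n+l+1)$, i.e. $q=A\,s(n+l+1)-B$ with $A,B\in\KK[s(n),\ldots,s(n+l)]$, where
\[
 A\ =\ (c_{l,1}+c_{l,0})\,\alpha_{0,1}\ -\ c_{l,1}\,\alpha_{0,0}\ =\ \sum_{i=0}^{l}\bigl((c_{l,1}+c_{l,0})c_{i,1}-c_{l,1}c_{i,0}\bigr)\,s(n+i).
\]
So as soon as $A\not\equiv 0$ we have exhibited $q\in\langle\sigma^{\infty}(p)\rangle$ that is linear in $s(n+\ord(q))$ with $\ord(q)=l+1$, and since $q$ carries no $n$, the associated equation $s(n+l+1)=B/A$ has right-hand side in $\KK(s(n),\ldots,s(n+l))$, as demanded by the definition of simple ratrec.

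The only place where a little care is needed — and the step I would call the main obstacle, though it is elementary — is verifying $A\not\equiv 0$, which forces a case split on $\deg P_l$. The coefficient of $s(n+l)$ in $A$ is $(c_{l,1}+c_{l,0})c_{l,1}-c_{l,1}c_{l,0}=c_{l,1}^{2}$, so if $\deg P_l=1$ then $A\neq 0$ at once. If $c_{l,1}=0$ then $A=c_{l,0}\sum_{i<l}c_{i,1}\,s(n+i)$, which is nonzero because $c_{l,0}\neq 0$ (as $P_l\neq 0$ and now $P_l$ is constant) and, the equation having degree $1$ while $c_{l,1}=0$, some $c_{i,1}$ with $i<l$ is nonzero; and if all the $c_{i,1}$ vanish then $p$ itself already has the nonzero constant leading coefficient $c_{l,0}$, hence is simple ratrec of order $l$, and there is nothing to prove. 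In every case we obtain a difference polynomial in $\langle\sigma^{\infty}(p)\rangle$ that is linear in its highest-shift term and of order at most $l+1$; since every holonomic sequence satisfying $p=0$ is a zero of the whole difference ideal $\langle\sigma^{\infty}(p)\rangle$, it is a zero of $q$, and therefore simple ratrec of order at most $l+1$. I expect no genuine difficulty beyond this bookkeeping: the underlying mechanism is just that a polynomial affine in $n$ together with its shift can be combined to kill $n$, and for $d=1$ the resulting eliminant is automatically affine in the new top variable $s(n+l+1)$.
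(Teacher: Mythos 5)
Your construction is, in substance, the paper's own proof carried out by hand: the combination $q=\alpha_{0,1}\,\sigma(p)-\sigma(\alpha_{0,1})\,p$ is exactly the resultant $\res_n(p,\sigma(p))$ of \eqref{eq:ratrecd1} (writing $\sigma(\alpha_{0,k})=\beta_{l,k}\,s(n+l+1)+\alpha_{1,k}$, your $q$ expands to precisely that polynomial), and your $A$ is its initial $(\beta_{l,0}+\beta_{l,1})\alpha_{0,1}-\beta_{l,1}\alpha_{0,0}$; the paper merely obtains the same eliminant via the elimination ideal $\langle p,\sigma(p)\rangle\cap\KK[s(n+l+1),(\alpha_{i,j})]$ instead of by cross-multiplication. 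The one point where you diverge is the treatment of the initial, and there your dichotomy is not quite the right one. You show that $A$ is not the zero \emph{polynomial} (via the case split on $c_{l,1}$), but to conclude that the sequence satisfies $s(n+l+1)=B/A$ as a genuine rational recursion you need $A$ to be nonvanishing when evaluated \emph{along the particular sequence} $(a(n))_n$, and a nonzero polynomial $A$ can still vanish identically on that sequence. The paper's argument closes exactly this residual case: since the initial is a $\KK$-linear combination of $\alpha_{0,0}$ and $\alpha_{0,1}$, it is a $\KK$-linear form in $s(n),\ldots,s(n+l)$, so if it vanishes along the sequence it is itself a constant-coefficient (in particular simple ratrec) equation of order at most $l$, and the dichotomy to run is ``initial nonzero on the sequence'' versus ``initial yields a lower-order ratrec equation'', not polynomial nonvanishing. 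With that adjustment your argument is complete; everything else (membership of $q$ in $\langle\sigma^{\infty}(p)\rangle$, the elimination of $n$, linearity in $s(n+l+1)$, and the reduction of the degenerate degree-zero case) is correct.
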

\begin{proof}
We use the change of variables preceding \Cref{prop:deg1}. Here $d=1$. We consider the algebraic ideal $\langle p, \sigma(p)\rangle \subset \KK[n,s(n+l+1),\alpha_{0,0},\alpha_{0,1},\alpha_{1,0},\alpha_{1,1}]$ and the lexicographic monomial ordering corresponding to the ranking $n\succ s(n+l+1) \succ\alpha_{0,0} \succ \alpha_{0,1} \succ \alpha_{1,0} \succ \alpha_{1,1}$. The first elimination ideal
\begin{equation}
    \langle p, \sigma(p)\rangle \cap \KK[s(n+l+1),\left(\alpha_{i,j}\right)_{i,j\leq 1}]
\end{equation}
is principal and generated by the non-trivial polynomial given by the resultant of $p$ and $\sigma(p)$: 
\begin{equation}\label{eq:ratrecd1}
\res_n(p,\sigma(p)) \coloneqq \left(-\alpha_{0,0} \beta_{l,1}+\alpha_{0,1} \beta_{l,0}+\alpha_{0,1} \beta_{l,1}\right) s(n+l+1)-\alpha_{0,0} \alpha_{1,1}+\alpha_{0,1} \alpha_{1,0}+\alpha_{0,1} \alpha_{1,1}.
\end{equation}
Observe that $\res_n(p,\sigma(p))$ is linear in $s(n+l+1)$. To conclude, we must verify that the \textit{initial} of \eqref{eq:ratrecd1}, i.e., its leading coefficient when regarded as a polynomial in $s(n+l+1)$, is nonzero or yields another ratrec equation of order less than $l+1$. This is the case since the initial of \eqref{eq:ratrecd1} is linear in the $\alpha_{0,j}$'s, which are themselves linear in the $s(n+j)$'s, $j\leq l$, by definition. Hence, if the initial of \eqref{eq:ratrecd1} vanishes, we obtain a ratrec equation of order $l$. Otherwise, we already have a ratrec equation of order $l+1$.  \CQFD
\end{proof}
The most interesting fact about \Cref{prop:deg1} is that the order of the holonomic difference polynomial is not restrictive.
\begin{example} Let us apply the method of \Cref{prop:deg1} to some basic examples. 
\begin{itemize}
    \item The sequence of Catalan numbers $\frac{1}{n+1}\binom{2\,n}{n}$ is a zero of the difference polynomial
    \begin{equation*}
        (n+2)\,s(n+1)- (4\,n+2)\,s(n).
    \end{equation*}
    We proceed as in the proof of \Cref{prop:deg1} without reducing the number of variables. We consider the ring $\KK[n,s(n+2),s(n+1),s(n)]$ and eliminate $n$ from the ideal. We get:
    \begin{equation}
        s(n+2)=\frac{2\,s(n+1) \left(8\,s(n)+s(n+1)\right)}{10\,s(n)-s(n+1)}.
    \end{equation}
    \item $((-1)^n+n)_{n\in\NN}$ is a zero of the difference polynomial  
    \begin{equation*}
    \left(-2 n-3\right) s\! \left(n\right)-2 s\! \left(n+1\right)+\left(2 n+1\right) s\! \left(n+2\right).
    \end{equation*}
    Our method leads to the constant coefficient (C-finite) relation 
    \begin{equation}
        s(n+3)=s(n+2)+s(n+1)-s(n).
    \end{equation}
     One could recover the same equation using a \textit{desingularization} method \cite{abramov1999desingularization,chen2016desingularization}. Thus, we may detect when a holonomic sequence is C-finite without solving its equation, i.e., recognizing when a holonomic equation has the same solution as a linear recurrence with constant coefficients. This reflects the expansions of some holonomic sequences as linear combinations of exponential polynomials.\QEDA
\end{itemize}
\end{example}
For $(n!^2)_{n\in\NN}$, the difference polynomial obtained with the method of \Cref{prop:deg1} is not simple ratrec. This is expected because $(n!^2)_n$ is a zero of a holonomic difference polynomial of degree $2$. However, we can extend the method for some higher-degree holonomic difference polynomials. Our primary purpose is to detect a general phenomenon that ensures that we can find an elimination ideal containing a ratrec generator. Using Maple \cite{maple}, we could prove this for holonomic difference polynomials of degrees up to $3$.

\begin{proposition}\label{th:deg3th} Every holonomic sequence of degree less or equal to $3$ and order $l\in\NN$ is simple ratrec of order at most $l+2$.
\end{proposition}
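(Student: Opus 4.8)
The plan is to follow the template of \Cref{prop:deg1}, but to reach order $l+2$ by eliminating the indeterminate $n$ from the larger ideal $\langle p,\sigma(p),\sigma^2(p)\rangle$. Fix a holonomic $p$ of degree $d:=\deg(p)\le 3$ and order $l$ with $p(a(n))=0$, and write $p$, $\sigma(p)$, $\sigma^2(p)$ as in \eqref{eq:redp}--\eqref{eq:sig2p} as polynomials in $n$ of degree exactly $d$, with leading coefficients the nonzero linear forms $C_d\cdot S_l$, $C_d\cdot S_{l+1}$, $C_d\cdot S_{l+2}$. The decisive structural feature is that the top shift $s(n+l+2)$ occurs only in $\sigma^2(p)$, and there only linearly (its coefficient being $P_l(n+2)$). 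As in the degree-$1$ case, one first reduces the variable count by regarding the scalar products $C_k\cdot S_{l+i}$ that appear---and the coefficients split off from them when $s(n+l+1)$ and $s(n+l+2)$ are isolated---as new indeterminates subject to the few linear relations imposed by the shift, so that the elimination takes place in a polynomial ring of size independent of $l$.

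The core of the argument is a controlled pseudo-Euclidean (subresultant) reduction of $p$ and $\sigma(p)$, regarded as polynomials in $n$ over the field of rational functions in the remaining indeterminates. Since $p$ and $\sigma(p)$ both have degree $d\le 3$ in $n$, at most $d-1\le 2$ reduction steps produce a nonzero $\ell=A\,n+B\in\langle p,\sigma(p)\rangle$ with $\deg_n\ell\le 1$, where $A,B\in\KK[s(n),\ldots,s(n+l+1)]$ do not involve $s(n+l+2)$. In the generic situation $A\not\equiv 0$, so that $\deg_n\ell=1$, and we set
\[
 q\ :=\ \res_n\!\left(\ell,\ \sigma^2(p)\right)\ =\ \sum_{j=0}^{d}\delta_j\,(-B)^{j}A^{\,d-j},
\]
where $\sigma^2(p)=\sum_{j=0}^d\delta_j n^j$ and, by \eqref{eq:sig2p}, the $\delta_j$ are linear forms in $s(n+2),\ldots,s(n+l+2)$. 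Since a resultant belongs to the ideal generated by its two arguments, $q\in\langle\ell,\sigma^2(p)\rangle\subseteq\langle p,\sigma(p),\sigma^2(p)\rangle\subseteq\langle\sigma^\infty(p)\rangle$; the expression for $q$ involves no $n$ (the $\delta_j$, $A$ and $B$ are all $n$-free coefficient expressions) and has order at most $l+2$; and $s(n+l+2)$ enters $q$ only through the $\delta_j$, hence only linearly. Therefore, whenever the initial of $q$ with respect to $s(n+l+2)$ is not identically zero, $q=0$ is a simple ratrec equation of order at most $l+2$ of which $(a(n))_n$ is a zero, as required.

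It remains to dispatch the degenerate branches, which is where I expect the real work to lie---they play the role of the ``if the initial vanishes, descend'' step that closes the proof of \Cref{prop:deg1}. The genericity hypotheses used above (that $\langle p,\sigma(p)\rangle$ contains a suitable degree-$1$ element $\ell$ with $A\not\equiv 0$, and that the initial of $q$ does not vanish identically) each fail only on a proper subvariety of the space of coefficients $c_{i,j}$ of $p$. When they fail, one is left with an element of $\langle\sigma^\infty(p)\rangle$ of order at most $l+1$ that must itself be processed---applying $\sigma$ once more and repeating the reduction---to yield a simple ratrec relation of order at most $l+2$; and the excluded coefficient tuples correspond to holonomic equations of strictly smaller effective degree or order, so the whole argument can be organised as an induction with base case \Cref{prop:deg1}. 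For $d\le 3$ this case analysis is finite and can be checked mechanically, keeping the $c_{i,j}$ symbolic, as the authors report having done with Maple. The main obstacle is thus not any individual resultant computation but the uniform bookkeeping of these degenerate branches; indeed, since the generic construction above is insensitive to $d$, one expects the statement to hold for every degree, the bound $3$ recording only the extent of the verified analysis.
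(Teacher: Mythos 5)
Your generic construction is sound and is essentially an explicit, resultant-based version of what the paper obtains by Gr\"obner elimination: producing $\ell=A\,n+B\in\langle p,\sigma(p)\rangle$ and forming $\res_n(\ell,\sigma^2(p))=\sum_j\delta_j(-B)^jA^{d-j}$ amounts to solving $n=-B/A\in\KK(s(n),\ldots,s(n+l+1))$ and substituting into $\sigma^2(p)$, which is precisely the order-$(l+2)$ specialization of the linear-algebra argument used for \Cref{th:holoisratrec} (there one solves for $n,\ldots,n^d$ from $p,\ldots,\sigma^{d-1}(p)$ and substitutes into $\sigma^d(p)$). The observation that $s(n+l+2)$ enters $q$ only through the $\delta_j$, one factor per monomial, hence linearly, is correct and matches the paper's key point.

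The gap is in the degenerate analysis, which you rightly identify as the real content but then dispatch with assertions that do not hold. First, the claim that vanishing of the initial of $q$ (identically along the sequence) yields a simple ratrec relation of order at most $l+1$ fails for your $q$: that initial is $\sum_j\beta_j(-B)^jA^{d-j}$, and the subresultant coefficients $A,B$ have degree $d-1$ in the coefficients of $\sigma(p)$, hence degree $d-1$ in $s(n+l+1)$; so the fallback relation has degree up to $d(d-1)$ in its highest shift ($2$ for $d=2$, $6$ for $d=3$) and is not simple ratrec. The branches $A\equiv 0$ along the sequence (which forces $B\equiv 0$, again of degree $d-1$ in $s(n+l+1)$) and $q\equiv 0$ (a common factor of $\ell$ and $\sigma^2(p)$ in $n$, which you do not mention) suffer the same defect. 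Second, the assertion that the excluded loci ``correspond to holonomic equations of strictly smaller effective degree or order'' is unsubstantiated: the degeneracy conditions involve the sequence values, not only the coefficients $c_{i,j}$, and no decrease of $(l,d)$ is exhibited, so the proposed induction has no well-founded measure. The paper closes exactly this hole by computing the elimination ideal itself and checking that its generators linear in $s(n+l+2)$ have initials such as \eqref{eq:coeffsl2}, in which every monomial contains at most one factor from each shift level $i$, so the fallback relation is automatically linear in $s(n+l+1)$ and the descent terminates. Your resultant $q$ is a larger element of the ideal than those generators and does not inherit that multilinear structure; to complete the proof along your lines you would need either to pass to the actual Gr\"obner generators (as the paper does) or to supply a genuinely different argument for the degenerate branches.
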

\begin{proof} \Cref{prop:deg1} covers the case of degree less or equal to $1$. When the degree $d\in \{2,3\}$, we aim to find a ratrec difference polynomial of order $l+2$. Hence, no reason exists to keep $s(n+l+1)$ explicit anymore. The reduction of the number of variables then consists of rewriting \eqref{eq:redp}, \eqref{eq:sigp}, and \eqref{eq:sig2p} as follows:
\begin{equation}\label{eq:redvard23}
\begin{split}
& p = \sum_{k=0}^d \alpha_{0,k}\, n^k,\,\,\,\, \sigma(p) = \sum_{k=0}^d\sum_{j=0}^k \binom{d-k+j}{j}\, \alpha_{1,d-j}\, n^{d-k},\\
& \sigma^2(p) = \sum_{k=0}^d\sum_{j=0}^k \binom{d-k+j}{j}\, 2^j\, \left(\beta_{l,d-j}\,s(n+l+2)+\alpha_{2,d-j}\right)\, n^{d-k},\\
&\beta_{l,j}\in\KK,\, 
\alpha_{i,j}\in\KK[s(n),\ldots,s(n+l+2)], i\leq 2, j\leq d.
\end{split}
\end{equation}
We consider the algebraic ideal $\langle p, \sigma(p), \sigma^2(p)\rangle \subset \KK[n,s(n+l+2),(\alpha_{i,j})_{i\leq 2,j\leq d}]$ with a lexicographic monomial ordering corresponding to any ranking of the variables such that $n\succ s(n+l+2) \succ \alpha_{i,j}, i\leq 2, j\leq d$. Then, the first elimination ideal 
\begin{equation}
\langle p, \sigma(p), \sigma^2(p)\rangle \cap  \KK[s(n+l+2),(\alpha_{i,j})_{i\leq 2,j\leq 3}], \label{eq:elim2}
\end{equation}
has some non-trivial generators of degree $1$ in $s(n+l+2)$. We give details to conclude the proof when $d=2$. For $d=3$, we leave this part to the reader; supplementary materials to support all the computations of the paper are available at the following GitHub web page \url{https://github.com/T3gu1a/SimpleRatRecAndHolonomic}.

For $d=2$, the elimination ideal \eqref{eq:elim2} has $7$ generators among which $3$ are linear in $s(n+l+2)$. The initial in one of them is given by
\begin{dmath}\label{eq:coeffsl2}
-2 \alpha_{0,0} \alpha_{1,1} \beta_{l,2}+4 \alpha_{0,0} \alpha_{1,2} \beta_{l,1}+\alpha_{0,1} \alpha_{1,0} \beta_{l,2}+\alpha_{0,1} \alpha_{1,1} \beta_{l,2}-4 \alpha_{0,1} \alpha_{1,2} \beta_{l,0}-2 \alpha_{0,1} \alpha_{1,2} \beta_{l,1}-4 \alpha_{0,2} \alpha_{1,0} \beta_{l,1}-\alpha_{0,2} \alpha_{1,0} \beta_{l,2}+8 \alpha_{0,2} \alpha_{1,1} \beta_{l,0}+\alpha_{0,2} \alpha_{1,1} \beta_{l,2}+4 \alpha_{0,2} \alpha_{1,2} \beta_{l,0}-2 \alpha_{0,2} \alpha_{1,2} \beta_{l,1}.
\end{dmath}
The important remark here is that every term $\alpha_{i_1,j_1} \alpha_{i_2,j_2} \alpha_{i_3,j_3}$ or $\alpha_{i_1,j_1} \alpha_{i_2,j_2} \beta_{l,j_3}$ is such that $i_1\neq i_2$, $i_1\neq i_3$, and $i_2\neq i_3$. Therefore, if \eqref{eq:coeffsl2} vanishes, it yields a ratrec equation of order at most $l+1$. Otherwise, we already have a ratrec equation of order $l+2$.
\CQFD
\end{proof}
\begin{example} $(n!^2)_{n\in\NN}$ is a zero of the holonomic difference polynomial $s(n+1)-(n+1)^2\,s(n)$. Proceeding as in the proof of \Cref{th:deg3th} without reducing the number of variables, we get an elimination ideal with $3$ simple ratrec generators. We select the following as our simple ratrec equation satisfied by $n!^2$:
\begin{equation}
s(n+3)=\frac{s(n+2) \left(2\, s(n) \,s(n+1)+2 \,s(n)\,s(n+2)-s(n+1)^{2}\right)}{s(n)\,s(n+1)}.
\end{equation}\QEDA
\end{example}

\section{Proof and Algorithms}

For the sequence $(n!^2+n!)_n$, a zero of a holonomic difference polynomial of degree $4$, we find no ratrec difference polynomial with the method of \Cref{th:deg3th}; hence the need to generalize the strategy for higher degree holonomic sequences. The proof of \Cref{th:deg3th} gives us some insight into this generalization. We expect the elimination ideal to have a simple ratrec generator of order, say, $l+m$, and an initial made of terms of the form
\begin{equation}\label{eq:gencoeff}
    c_I\,\prod_{i\in I} \alpha_{i,j_i},\,\, c_I\in\KK,\, I\subset \{1,\ldots,m\},
\end{equation}
where $\alpha_{i,j_i}$ is a $\KK$-linear combination of $s(n+i),\ldots, s(n+l+i)$, $i\leq m,$ with the appropriate notations as in \eqref{eq:redvard23}. Note that the elimination ideal is computationally hard to obtain for generic holonomic difference polynomials of degrees higher than $3$. One way to arrive at a general proof could come from investigating Buchberger's algorithm in this context \cite{becker1993grobner}. However, this is not the method we use. Our main observation from the previous section is that the obtained simple ratrec difference polynomials are always of order at most $l+d$, where $l$ and $d$ are the respective order and degree of the holonomic difference polynomials considered. We use linear algebra to prove this fact in Theorem \ref{th:holoisratrec}. Our first algorithm is iterative and guarantees finding a simple ratrec equation of minimal order. Given a holonomic difference polynomial $p$, at every iteration $j\in\NN\setminus \{0\}$, the idea is to compute the ideal
\begin{equation}
\langle p, \sigma(p),\ldots, \sigma^j(p)\rangle \cap  \KK[s(n+l+j),s(n+l+j-1),\ldots,s(n)], \label{eq:elimalgo}
\end{equation}
and verify if it contains a simple ratrec generator. If this is the case, we stop and return one such generator (usually the one of minimal degree); otherwise, we increment $j$ and repeat the computation. This algorithm is implemented by the new command \texttt{HoloToSimpleRatrec} with option \texttt{method=GB} (standing for Gr\"obner bases) from the package \texttt{NLDE} \cite{teguia2023operations}. We refer to this algorithm or its implementation as the \textit{Gr\"obner bases method}. As inputs, it takes the holonomic difference polynomial (equation) and the indeterminate term (like $s(n)$). We use the degree of the input equation as a bound for the number of iterations. The user may specify a lower bound with the optional argument \texttt{userbound=m}, where \texttt{m} is the proposed value. \Cref{ex:algoexpl} gives some conversions obtained with this method. We keep the examples simple to save some space. For the sequence $(n!^2+n!)_n$, the algorithm returns a $6$th-order simple ratrec equation with large integer coefficients. The discussion in \cite[p. 90]{neun1990very} may explain the presence of such coefficients.

\begin{example}[Automatic conversion with Gr\"obner bases]\label{ex:algoexpl}\item 
\begin{itemize}
    \item Case of $n!^3$:

\begin{lstlisting}
> with(NLDE, HoloToSimpleRatrec): #loading the implementation
> p1 := s(n+1) - (n+1)^3*s(n): #the holonomic difference polynomial
> HoloToSimpleRatrec(p1,s(n),method=GB) #use of our implementation
\end{lstlisting}
\begin{small}
\begin{equation}\label{eq:maple1}
s\! \left(n+3\right)=\frac{s\! \left(n+2\right) \left(4 s\! \left(n\right) s\! \left(n+1\right)^{2}-4 s\! \left(n\right) s\! \left(n+2\right)^{2}+s\! \left(n+1\right)^{3}+s\! \left(n+1\right)^{2} s\! \left(n+2\right)\right)}{s\! \left(n+1\right) \left(s\! \left(n\right) s\! \left(n+1\right)-s\! \left(n\right) s\! \left(n+2\right)-2 s\! \left(n+1\right)^{2}\right)}
\end{equation}    
\end{small}

\item $\left(n^{2}+\sin\left(n\pi/4\right)^{2}\right)_{n\in\NN}$ is a zero of the $3$rd-order difference polynomial assigned to \texttt{p2} below. Our implementation finds a C-finite recurrence of order $5$ from it.
\begin{lstlisting}
> p2 := (-2*n^2 - 8*n - 11)*s(n) + (2*n^2 + 4*n + 5)*s(n + 1) 
+ (-2*n^2 - 8*n - 11)*s(n + 2) + (2*n^2 + 4*n + 5)*s(n + 3):
> HoloToSimpleRatrec(p2,s(n),method=GB)
\end{lstlisting}
\begin{equation}\label{eq:maple3}
s\! \left(n+5\right)=s\! \left(n\right)-3 s\! \left(n+1\right)+4 s\! \left(n+2\right)-4 s\! \left(n+3\right)+3 s\! \left(n+4\right)
\end{equation}
\end{itemize}
\QEDA
\end{example}
Due to the use of Gr\"obner bases, our implementation \texttt{HoloToSimpleRatrec/method=GB} tends to be slow for holonomic difference polynomials of degrees greater than $4$. If we neglect the minimality of the order for the simple ratrec equation sought, then a more efficient algorithm can be deduced from the proof of \Cref{th:holoisratrec}.

\begin{theorem}\label{th:holoisratrec} Every holonomic sequence of order $l$ and degree $d$ is simple ratec of order at most $l+d$.   
\end{theorem}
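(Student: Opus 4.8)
The plan is to show that, for a holonomic difference polynomial $p$ of order $l$ and degree $d$, the shifts $\sigma^0(p),\sigma^1(p),\ldots,\sigma^d(p)$ already suffice to eliminate $n$ and produce a simple ratrec equation of order at most $l+d$. The key structural fact, already visible in \eqref{eq:redp}, is that each $\sigma^j(p)$ is linear in the shifts $s(n+j),\ldots,s(n+l+j)$ and has degree $\leq d$ in $n$; regrouping as in \eqref{eq:redvard23}, write $\sigma^j(p)=\sum_{k=0}^d \gamma_{j,k}\, n^k$, where each coefficient $\gamma_{j,k}$ is a $\KK$-linear form in $s(n+j),\ldots,s(n+l+j)$. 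So the $d+1$ equations $\sigma^0(p)=\cdots=\sigma^d(p)=0$ constitute a linear system in the $d+1$ ``monomials'' $1,n,n^2,\ldots,n^d$, with coefficient matrix $M$ whose $(j,k)$ entry is $\gamma_{j,k}$, an entry that depends linearly only on the block of variables $s(n+j),\ldots,s(n+l+j)$.

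First I would argue that this system has a nontrivial solution in the $n^k$, and in particular $\det M = 0$ is a consequence of the ideal $\langle \sigma^0(p),\ldots,\sigma^d(p)\rangle$ (this is the Sylvester-style elimination: $\det M$ lies in the elimination ideal with $n$ removed, since $(1,n,\ldots,n^d)^T$ is a nonzero vector in the kernel). Next I would expand $\det M$ along, say, the first row: $\det M = \sum_{k=0}^d (-1)^k \gamma_{0,k}\, M_{0,k}$, where $M_{0,k}$ is the corresponding minor built from rows $1,\ldots,d$ (i.e.\ from $\sigma^1(p),\ldots,\sigma^d(p)$). The crucial point, generalizing the parity observation in the proofs of \Cref{prop:deg1} and \Cref{th:deg3th}, is that every monomial appearing in $\det M$ is a product $\prod_{j\in J}\gamma_{j,k_j}$ over a set $J$ of \emph{distinct} row indices $j\in\{0,\ldots,d\}$ — because a determinant picks exactly one entry from each row. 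Since $\gamma_{j,\cdot}$ involves only $s(n+j),\ldots,s(n+l+j)$, the highest-shift variable $s(n+l+d)$ can occur only through $\gamma_{d,\cdot}$, i.e.\ in the minor contributed by row $d$, and it occurs there linearly. Collecting, $\det M$ is linear in $s(n+l+d)$: the coefficient of $s(n+l+d)$ is a polynomial in the lower shifts, built from $d\times d$ minors of rows $1,\ldots,d-1$ together with the $s(n+l+d)$-part of row $d$.

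I would then finish exactly as in \Cref{prop:deg1}: the resulting equation $\det M = 0$ is linear in $s(n+l+d)$, so either its initial (the coefficient of $s(n+l+d)$) is generically nonzero — giving a simple ratrec equation of order $l+d$ — or that initial vanishes identically on the solution sequence, in which case, because the initial is itself a polynomial in the strictly lower shifts $s(n),\ldots,s(n+l+d-1)$ (every term in it is a product of $\gamma_{j,k_j}$ over distinct rows, hence misses at least one row and therefore does not see $s(n+l+d)$), we obtain a lower-order cancelling relation and may descend. Iterating the descent terminates at a simple ratrec equation of order at most $l+d$, which is what we want. The main obstacle I anticipate is the bookkeeping in the expansion of $\det M$: I need to verify carefully that $\det M$ is not the zero polynomial (so that it is a genuine cancelling relation for the sequence $(a(n))_n$, not a vacuous identity), and that its dependence on $s(n+l+d)$ is genuinely affine with the initial having the stated lower-order form; the ``distinct rows'' argument handles the shape, but confirming nonvanishing — equivalently, that the $n^k$ really are forced to satisfy this one relation and not a smaller linear dependency living in fewer shifts — requires tracking the leading coefficients $c_{l,d}$ of $p$ through the minors, much as the nonvanishing of the Sylvester resultant is argued in the classical case.
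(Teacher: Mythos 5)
Your setup coincides with the paper's: write $\sigma^j(p)=\sum_{k=0}^d\gamma_{j,k}\,n^k$ with each $\gamma_{j,k}$ a $\KK$-linear form in $s(n+j),\ldots,s(n+l+j)$, regard $\sigma^0(p),\ldots,\sigma^d(p)$ as a linear system in the powers of $n$, and eliminate. Your $(d+1)\times(d+1)$ determinant $\det M$ is, up to sign and a denominator, exactly what the paper produces by solving the $d\times d$ subsystem coming from $\sigma^0(p),\ldots,\sigma^{d-1}(p)$ for $(n,n^2,\ldots,n^d)$ by Cramer's rule and substituting into $\sigma^d(p)$ (this is just cofactor expansion of $\det M$ along the last row). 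Your ``one entry per row'' observation correctly explains why $\det M$ is affine in $s(n+l+d)$. So the route is essentially the paper's, packaged determinantally rather than via substitution.

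The genuine gap is the point you flag at the end and then leave open: the nonvanishing. The entire technical content of the paper's proof is the lemma that the $d\times d$ matrix with rows $\Gamma_j=(C_1\cdot S_{l+j},\ldots,C_d\cdot S_{l+j})$ built from $\sigma^0(p),\ldots,\sigma^{d-1}(p)$ has full rank over $\KK(s(n),\ldots,s(n+l+d-1))$; this is proved by a concrete peeling argument (in any vanishing linear combination $\sum_j\lambda_j\Gamma_j=0$, the extreme shifts $s(n)$ and $s(n+l+d-1)$ can only enter through the first and last rows, forcing the corresponding $\lambda$'s to vanish, and one iterates inward). Without this lemma your argument does not close: (a) $\det M$ could a priori be the zero polynomial, hence a vacuous relation; and (b) even if $\det M\neq 0$, its initial could vanish identically, in which case you do not obtain a quasi-linear equation at order $l+d$, and your fallback ``descend and iterate'' is not justified --- the initial is a sum of products of $\gamma_{j,k}$ over distinct rows, hence a polynomial of degree up to $d$ in the lower shifts that is \emph{not} itself quasi-linear, so the descent has no reason to terminate at a simple ratrec equation. (This is exactly the delicate point that \Cref{prop:deg1} and \Cref{th:deg3th} handle by inspecting the explicit shape of the initial, and that the paper's proof of \Cref{th:holoisratrec} sidesteps entirely by establishing full rank, so that the $n^k$ are uniquely determined rational functions free of $s(n+l+d)$ and the substituted $\sigma^d(p)$ is manifestly affine in $s(n+l+d)$.) Supplying that rank argument is what is needed to turn your sketch into a proof; everything else in your proposal then matches the paper.
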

\begin{proof} Let $p\in\DD_s$ be holonomic of degree $d$ and order $l$. We see $p$ and its shifts as polynomials in $n$ such that
\begin{equation}\label{eq:shiftgamma}
    \sigma^j(p) = \sum_{k=0}^d \gamma_{j,k}\,n^k,\,\, j=0,\ldots,d,
\end{equation}
where $\gamma_{j,k}\in\KK[s(n+j),\ldots,s(n+l+j)]$. We consider the associated equations for $j<d$ and write them as follows
\begin{equation}\label{eq:nksys}
   \sum_{k=1}^d \gamma_{j,k}\,n^k = -\gamma_{j,0},\,\, j=0,\ldots,d-1.
\end{equation}
This is a linear system of $d$ equations in the unknown $(n,n^2,\ldots,n^d)^T$. We claim that the matrix of the system has full rank due to the shifts involved in the $\gamma_{j,k}$'s. Indeed, each $\gamma_{j,k}$ has the following form:
\begin{equation}\label{eq:gammacsdef}
    \gamma_{j,k} \coloneqq \sum_{i=0}^{l} c_{i,k}\,s(n+i+j) = C_k\cdot S_{l+j},
\end{equation}
where $c_{i,k}$ is the constant coefficient of $n^k$ in the polynomial coefficient of $s(n+i)$ in $p$, $i=0,..,l$. We use the same vector notations from $\eqref{eq:redp}$ for $C_k$ and $S_{l+j}$.

Let $\Gamma_1,\ldots,\Gamma_d$ be the rows of the matrix from \eqref{eq:nksys}. Observe that 
\begin{equation}\label{eq:Gammacsdef}
    \Gamma_j = (C_1\cdot S_{l+j}, C_2\cdot S_{l+j},\ldots, C_d\cdot S_{l+j}).
\end{equation}
Suppose there exist $\lambda_1,\lambda_2,\ldots,\lambda_d\in \KK(s(n),\ldots,s(n+d+l-1))$ such that $\sum_{j=1}^d \lambda_j\,\Gamma_j=0$. We look at the components of $v\coloneqq \sum_{j=1}^d \lambda_j\,\Gamma_j = (v_1,v_2,\ldots,v_d)$ and derive conditions on the $\lambda$'s to obtain the zero vector. We consider a component of $v$ in which $s(n)$ and $s(n+l+d-1)$ (the lowest and the largest shifts of $s(n)$) occur in the resulting linear combination. Such a component necessarily exists because $p\neq 0$. Without loss of generality, let us assume that $v_1$ is such a component. Then we have that 
\begin{equation}\label{eq:l1ldeq}
    v_1 - \left(\lambda_1\,c_{1,1}\,s(n) + \lambda_d\,c_{l,1}\,s(n+l+d-1)\right)
\end{equation}
is free of $s(n)$ and $s(n+l+d-1)$. Thus to have $v_1=0$, we must have $\lambda_1=\lambda_d=0$. We repeat this reasoning considering $\lambda_1=\lambda_d=0$. If $s(n+1)$ and $s(n+l+d-2)$ are the respective lowest and largest shifts of $s(n)$ in $v_1$, then
\begin{equation}\label{eq:l2ld1eq}
    v_1 - \left(\lambda_2\,c_{1,1}\,s(n+1) + \lambda_{d-1}\,c_{l,1}\,s(n+l+d-2)\right)
\end{equation}
is free of $s(n+1)$ and $s(n+l+d-2)$, and we deduce that $\lambda_2=\lambda_{d-1}=0$. Therefore by the same process it follows that $\lambda_1=\lambda_2=\cdots=\lambda_d=0$. 

Hence for all positive integers $k\leq d$, $n^k\in\KK(s(n),\ldots,s(n+d+l-1))$. Finally, we plug these rational functions into $\sigma^d(p)$. Since all the $n^k$'s are free of $s(n+l+d)$, the resulting difference polynomial yields a simple ratrec equation of order $l+d$.\CQFD
\end{proof}
The above proof is constructive and yields an algorithm to convert any holonomic equation into a simple ratrec equation. We mention that the analog of \Cref{th:holoisratrec} in the differential case is relatively simple and our result cannot be seen as an adaptation of the differential context. Indeed, the first derivative of any algebraic differential equation is linear in its highest-order term. Hence, in that context, the problem is reduced to finding an algebraic differential equation from a holonomic differential equation. Algorithms for such computations are described in \cite{jimenez2020some,teguia2023operations}.

The default value of \texttt{method} for our command \texttt{HoloToSimpleRatrec} is $\texttt{LA}$, which stands for linear algebra. We refer to this method as the \textit{linear algebra method}. Let us present some computations.
\begin{example}[Automatic conversion with linear algebra]\label{ex:exla} For the search for a holonomic equation satisfied by a given general term, we use the algorithm from \cite{teguia2024computing}.
\begin{itemize}
\item $\left(\binom{2\,n}{n}\,\binom{3\,n}{n}\right)_{n\in\NN}$ is a zero of the difference polynomial assigned to \texttt{p3} below. Both of our methods return the same $3$rd-order simple ratrec equation.
\begin{lstlisting}
> with(NLDE, HoloToSimpleRatrec): 
> p3 := s(n + 1)*(n + 1)^2 - 3*(3*n + 1)*(3*n + 2)*s(n):
> HoloToSimpleRatrec(RE3,s(n),method=LA)
\end{lstlisting}
\begin{small}
\begin{equation}
    s\! \left(n+3\right)=\frac{3 s\! \left(n+2\right) \left(26244 s\! \left(n\right) s\! \left(n+1\right)-702 s\! \left(n\right) s\! \left(n+2\right)-378 s\! \left(n+1\right)^{2}+13 s\! \left(n+1\right) s\! \left(n+2\right)\right)}{5508 s\! \left(n\right) s\! \left(n+1\right)-201 s\! \left(n\right) s\! \left(n+2\right)-84 s\! \left(n+1\right)^{2}+4 s\! \left(n+1\right) s\! \left(n+2\right)}
\end{equation}
\end{small}
\item The sequence $\left(n^4/2^n+3^n\right)_n$ is a zero of the second-order holonomic difference polynomial assigned to \texttt{p4} below. The linear algebra method finds a C-finite recurrence equation.
\begin{lstlisting}
> p4 := (15*n^4+48*n^3+36*n^2-24*n-30)*s(n) - (7*n^2+4*n+4)
*(5*n^2-4*n-4)*s(n + 1) + (10*n^4-8*n^3-12*n^2-8*n-2)*s(n + 2):
> HoloToSimpleRatrec(RE4,s(n),method=LA)
\end{lstlisting}
\begin{small}
\begin{equation}\label{eq:maple4}
s\! \left(n+6\right)=-10 s\! \left(n+4\right)-\frac{3 s\! \left(n\right)}{32}+\frac{31 s\! \left(n+1\right)}{32}-\frac{65 s\! \left(n+2\right)}{16}+\frac{35 s\! \left(n+3\right)}{4}+\frac{11 s\! \left(n+5\right)}{2}
\end{equation}  
\end{small}

The Gr\"obner bases method finds a nonlinear $4$th-order equation for this example.
\item Inhomogeneous equations are also allowed as inputs. This may be used to find a C-finite equation satisfied by a given polynomial sequence. Indeed, it is easy to verify that the linear algebra method always finds a C-finite recurrence for any input of the form $s(n)+P$ (or $s(n)+P=0$ or $s(n)=P$), where $P\in\KK[n]$. This is well-known: every polynomial sequence satisfies a C-finite recurrence equation. For the following example, we generate a random polynomial of degree $5$ and find its corresponding C-finite recurrence of order $5$.
\begin{lstlisting}
> p5:=s(n)+randpoly(n,degree=5)
\end{lstlisting}
\begin{equation}
    \texttt{p5}\coloneqq s\! \left(n\right)-96 n^{5}+9 n^{4}+93 n^{3}+81 n^{2}+16 n+86
\end{equation}

\begin{lstlisting}
> HoloToSimpleRatrec(p5,s(n),method=LA)
\end{lstlisting}
\begin{equation}
s\! \left(n+5\right)=11520+5 s\! \left(n+4\right)-10 s\! \left(n+3\right)+s\! \left(n\right)-5 s\! \left(n+1\right)+10 s\! \left(n+2\right)
\end{equation}
\end{itemize}
\QEDA
\end{example}

\section{Concluding Remarks}

We described and presented our implementation of two algorithms for computing simple ratrec equations from holonomic recurrence equations. One is based on elimination with Gr\"obner bases, and the other on linear algebra. We now compare the timings of both methods. We generate $5$ random holonomic recurrence equations of order at most $10$ and degree at most $5$ that serve as inputs. We do this several times and collect the timings that average the performance of the two methods. The timings of \Cref{tab:comp} are obtained from the following holonomic equations. A Maple worksheet supporting these computations can be found in our GitHub repository at \url{https://github.com/T3gu1a/SimpleRatRecAndHolonomic}.
\begin{enumerate}
    \item 
    \begin{dmath}
        -n^{2} s\! \left(n\right)+\left(n^{2}-1\right) s\! \left(n+1\right)+\left(n^{2}-n-1\right) s\! \left(n+2\right)+n s\! \left(n+3\right)+\left(n+1\right) s\! \left(n+4\right)+\left(-n^{2}-1\right) s\! \left(n+5\right)-n^{2} s\! \left(n+6\right)+\left(n^{2}+n\right) s\! \left(n+7\right)+\left(-1-n\right) s\! \left(n+8\right)+\left(n^{2}-n\right) s\! \left(n+9\right)+\left(-n^{2}+1\right) s\! \left(n+10\right)=0 \label{eq:reeq1}
    \end{dmath}
    \item 
    \begin{dmath}
        \left(-n^{2}+n+1\right) s\! \left(n\right)+\left(n^{3}-n^{2}+n+1\right) s\! \left(n+1\right)+\left(n^{2}-n-1\right) s\! \left(n+2\right)+\left(n^{3}-n^{2}-n\right) s\! \left(n+3\right)+n s\! \left(n+4\right)+\left(n^{3}-n-1\right) s\! \left(n+5\right)+\left(-n^{3}+n^{2}-n+1\right) s\! \left(n+6\right)+\left(n^{2}-n\right) s\! \left(n+7\right)+\left(-n^{3}+n^{2}-n+1\right) s\! \left(n+8\right)+\left(-n^{3}-n^{2}\right) s\! \left(n+9\right)=0\label{eq:reeq2}
    \end{dmath}
    \item 
    \begin{dmath}
        \left(n^{4}-n^{3}-n^{2}-1\right) s\! \left(n\right)+\left(n^{4}-n^{2}-n-1\right) s\! \left(n+1\right)+\left(-n^{4}+n^{3}-n^{2}+n\right) s\! \left(n+2\right)+\left(n^{2}+1\right) s\! \left(n+3\right)+\left(-n^{4}-n^{3}+n^{2}+n+1\right) s\! \left(n+4\right)+\left(-n^{4}+n^{2}+n+1\right) s\! \left(n+5\right)+\left(-n^{4}+n^{3}+n^{2}\right) s\! \left(n+6\right)+\left(-n^{4}+n^{2}+n\right) s\! \left(n+7\right)+\left(n^{3}+n^{2}-n\right) s\! \left(n+8\right)=0 \label{eq:reeq3}
    \end{dmath}
    \item 
    \begin{dmath}
        \left(-n^{4}+n^{2}-n+1\right) s\! \left(n\right)+\left(n^{4}-n^{3}+n^{2}+n\right) s\! \left(n+1\right)+\left(-n^{5}-n^{4}+n^{2}+n\right) s\! \left(n+2\right)+\left(n^{4}+n^{2}-1\right) s\! \left(n+3\right)+\left(-n^{4}-n^{2}-n+1\right) s\! \left(n+4\right)+\left(-n^{4}-n^{3}+n^{2}-n-1\right) s\! \left(n+5\right)+\left(-n^{3}+n^{2}+n-1\right) s\! \left(n+6\right)=0 \label{eq:reeq4}
    \end{dmath}
    \item 
    \begin{dmath}
        \left(-1-n\right) s\! \left(n\right)-n s\! \left(n+1\right)+\left(-n+1\right) s\! \left(n+2\right)+\left(-1-n\right) s\! \left(n+3\right)+\left(n+1\right) s\! \left(n+4\right)-n s\! \left(n+5\right)+s\! \left(n+6\right)+\left(-1-n\right) s\! \left(n+7\right)=0 \label{eq:reeq5}
    \end{dmath}
\end{enumerate}

\begin{table}[ht!]
\centering
\caption{Outputs ($t, \, \ord,\, \deg\,$) of \texttt{method=GB} and \texttt{method=LA} for the differential polynomials in \eqref{eq:reeq1}--\eqref{eq:reeq5}. We display the CPU time~$t$ in seconds, the order~$\ord$, and the degree~$\deg$ of the computed simple ratrec equation. We stopped the computation if we did not obtain a result after $300$ seconds (5 minutes). This is indicated as ``$300+$'' in the table.}  
\label{tab:comp}
\begin{tabular}{|c|c|c|c|c|}
\hline
\multicolumn{1}{|l|}{Equation} & Order & Degree & \begin{tabular}[c]{@{}c@{}}Output\\ \texttt{method=GB}\end{tabular} & \begin{tabular}[c]{@{}c@{}}Output\\ \texttt{method=LA}\end{tabular} \\ \hline
$\eqref{eq:reeq1}$             & $10$  & $2$    & $(0.218,12,3)$                        & $(0.031,12,3)$                                            \\ \hline
$\eqref{eq:reeq2}$             & $9$   & $3$    & $(27.547,11,5)$                       & $(0.188,12,4)$                                             \\ \hline
$\eqref{eq:reeq3}$             & $8$   & $4$    & $(300+,-,-)$                          & $(0.734,12,5)$                                             \\ \hline
$\eqref{eq:reeq4}$             & $6$   & $5$    & $(300+,-,-)$                          & $(2.203,11,6)$                                             \\ \hline
$\eqref{eq:reeq5}$             & $7$   & $1$    & $(0.,8,2)$                            & $(0.,8,2)$                                                 \\ \hline
\end{tabular}
\end{table}

As illustrated by \Cref{tab:comp}, while the linear algebra method has better efficiency, the Gr\"obner bases method always returns an equation of minimal order. Indeed, although \Cref{prop:deg1} is a special case of \Cref{th:holoisratrec}, \Cref{th:deg3th} gives a better bound for holonomic difference equations of degree $3$. This shows how tight the bound given by the Gr\"obner bases method can be. The difference between the orders of their outputs may be greater than $1$ in absolute value. For instance, for $(n!^4)_{n\in\NN}$, \texttt{HoloToSimpleRatrec/method=GB} finds a $3$rd-order simple ratrec equation while \texttt{HoloToSimpleRatrec/method=LA} computes a $5$th-order equation.
 
As a final remark, we point out that our algorithms may be used to automatically generate ``\textit{Somos-like}'' sequences \cite{malouf1992integer} like in \cite{ekhad2014generate}. The idea is to take any holonomic equation of the form $a(n+l)+P_{n+l-1}(n)\,a(n+l-1)+\cdots+P_0(n)\,a(n)=0$ with only integral initial values. Then, the resulting sequence is integral, and one can use our algorithms to find the desired rational recursion, thus defining an integer sequence with a rational recursion.

\bigskip

\section*{Acknowledgment.} We thank Gleb Pogudin for a helpful conversation about \Cref{th:cancelpoly} and Somos-like sequences. This work was supported by UKRI Frontier Research Grant EP/X033813/1.

\end{document}